\newtheorem{theorem}{Theorem}[section]
\newtheorem{remark}{Remark}[section]
\newtheorem{assumption}{Assumption}[section]
\newtheorem{example}{Example}[section]
\numberwithin{equation}{section}
\newcommand{\mr}{\mathbb{R}}
\newcommand{\Om}{\Omega}
\newcommand{\pa}{\partial}
\begin{document}

\begin{center}
{\Large \bf  A cylindrical coordinates approach concerning internal waves for the Antarctic Circumpolar Current}\\[1ex]

{Lili Fan\footnote{Email:fanlily89@126.com}$^a$,\quad  Shuge Shen\footnote{ Email:shenge990728@163.com}$^a$,
\quad}\\[1ex]

{$^a$College of Mathematics and Information Science, Henan Normal University,}\\
{Xinxiang, Henan 453007,   China}\\[1ex]

\end{center}


\begin{abstract}
In this paper, we devise a new exact and partially explicit solution to the governing equations of geophysical fluid dynamics for an inviscid and incompressible azimuth flow with a discontinuous density distribution and subjected to forcing terms in terms of cylindrical coordinates. The obtained solution represents a steady, purely azimuthal, stratified flow with an associated free surface and an interface that is suitable for describing the Antarctic Circumpolar Current. Resorting to a functional analysis, we demonstrate that the relationship between the imposed pressure at the free surface and the resulting surface deformation is well-defined and show that the continuity of the pressure along the interface generates an equation that describes implicitly the shape of the interface. Moreover, a particular example is considered to show that the interface can be determined explicitly. Finally, we derive an infinite regularity about the interface and obtain the expected monotonicity properties between the surface pressure and its distortion.
\end{abstract}

\date{}

\maketitle

\noindent {\sl Keywords\/}: Antarctic Circumpolar Current, Coriolis force, exact solution in cylindrical coordinates, internal waves, discontinuous density.


\noindent {\sl AMS Subject Classification} (2010): 35Q31; 35Q35; 35Q86; 35R35. \\

\section{Introduction}

Consideration in this paper is invisible, incompressible, stratified geophysical water flows at the 45th parallel south that exhibit vertical structure, internal waves, and a preferred propagation direction. Using cylindrical coordinates in a rotating frame, we address the derivation and analysis of a new exact and partially explicit solution to the governing equations of the geophysical fluid dynamics (GFD). The solution we derive represents a steady, stratified flow propagating purely in the azimuthal direction, meaning the velocity profile, the pressure are described below and up to the free surface as a function of depth and the angle of latitude. Our obtained solution can capture the gross dynamics of the Antarctic Circumpolar Current (ACC).

It is undeniable that ACC is one of the most important and significant currents in the Earth's Ocean and it is the only current in the world that circumnavigates the globe. Serving as a link between global ocean basins, the existence of the ACC not only plays an important role in the air-sea interaction system in the Southern Ocean, but also features in the circulation of the southern hemisphere and even the whole world \cite{FCM,IR,OBVW,RHO}.

In this paper, we are devoted to investigating the behavior of the ACC affected by a discontinuous stratification, which can generate the two main fronts of the ACC \cite{PLB}. Inspired by \cite{MQ2}, we try to derive and analysis the explicit and/or exact solutions to the fully nonlinear governing equations of GFD in terms of cylindrical coordinates. The method of delivering explicit and/or exact solutions to the fully nonlinear governing equations of GFD to make study of physically realistic flows is initiated by Constantin in \cite{Co1,Co2,Co3,Co4} and Constantin and Johnson in \cite{CoJ15,CoJ17}, which has recently been adopted in numerous mathematical literature, cf. \cite{CIY,Co,H1,HM,HM1,HM2,I,Ma}. In regard to the research on the specifically azimuthal flows, it has recently been studied by Constantin and Johnson in \cite{CoJ16} for the modelling of equatorial flows and in \cite{CoJ162} for the modelling of the ACC, where the authors made the assumption that the flow is homogeneous. Motivated by the approach presented in \cite{CoJ16,CoJ162}, there have appeared some extended studies on the construction of exact solutions for modelling of stratified flows with a given density in \cite{Ba,HM,HM2,MQ1} and with a general fluid stratification in \cite{FSC,HM1,HM3,M,MQ}. Exact solutions for flows with more generally physical background can be referred to \cite{HsM,HsM1,Q} and recent studies on the ACC can be referred to \cite{CoJ162,Ha2,HsM1,MQ,MQ1,Mar}. It is worth noting that the exact azimuthal flows accommodate a discontinuous stratification have been investigated in \cite{M,MQ2} very recently  modelling the equatorial flows and the ACC in terms of spherical coordinates. Our goal here is to further advance the mathematical analysis of ACC in cylindrical coordinates, that is, to derive the exact solution applicable to the governing equations with a discontinuous stratification. Our choice of a cylindrical coordinate offers a more apparent insight into the properties of the ACC flow compared to the spherical coordinates and it simultaneously retains a relatively considerable amount of the mathematical structure of the full problem.

Undoubtedly, the appearance of discontinuous stratification complicates the analysis of ACC compared to the existing analytical investigations on the ACC. Specifically speaking, an interface, which plays the role of an internal wave, arises from the vertical stratification, that is, flows with two layers of different, nonconstant densities, where the denser layer lies below the less dense one, achieving a stable stratification. As such, in addition to the expressions of the velocity field and the pressure functions corresponding to the two layers of the fluid, the implicit descriptions of the free surface and interface are also considered intricately. To overcome this, we employ the dynamic boundary condition at the free surface to obtain the implicit relationship between the imposed pressure and the surface deformation and we use the continuity of the pressure along the interface to define the interface function. By a functional analysis, we demonstrate that the above two relations obey the implicit function theorem, leading to the desired implicit equations of free surface and interface. Furthermore, an interesting regularity property for the interface and the monotonicity properties between the free surface and the surface pressure have also been deduced using the two implicit relationships.

The remainder of this paper is organized as follows.  In Section 2, we present the governing equations for the geophysical flows in cylindrical coordinates. In Section 3, we derive exact solutions to the governing equations with the explicit azimuthal velocity and the corresponding pressure function being obtained in Section 3.1 and implicit free surface and interface solutions being obtained in Section 3.2. In Section 4, we study the regularity property of the interface, followed by a particular explicit example, and we also investigate the relations between the monotonicity of the free surface and the surface pressure.
\section{The governing equations}

In this section, we give the governing equations for geophysical fluid dynamics (GFD) in a cylindrical coordinate system, together with the boundary conditions for the free surface and a rigid bed.

The cylinder is generated by the "straightened" equator (the great circle of the sphere) which is parallel to the $z$ axis and the interior of the cylinder which amounts to the interior of the sphere is represented by standard polar coordinates. Hence, we will work in a system of right handed coordinates $(r,\theta,z)$, where $r$ denotes the distance to the center of the sphere, $\theta\in(-\pi/2,\pi/2)$ is the polar angle increasing from North to South, and the positive $z$-direction denotes azimuthal flow from West to East. Throughout this paper, we simply assume that the Antarctic Circumpolar Current (ACC) locates at
\begin{equation*}
\theta\in I_\theta:=\left[\frac \pi 4-\frac \pi {18},\frac \pi 4+\frac \pi {18}\right].
\end{equation*}

Guided by the observations in \cite{MQ2}, we will consider in our study the incompressible and inviscid fluid system which is divided into two regions
\begin{align*}
&D_1:=\{(r,\theta,z):R_2+h(\theta,z)<r<R_1+k(\theta,z)\}, \\
\text{and} \qquad &D_2:=\{(r,\theta,z):d(\theta,z)<r<R_2+h(\theta,z)\},
\end{align*}
where $R_j:=R+r_j, j=1,2$ for $0<r_2<r_1\ll R$ and $R\approx6378$ km being the radius of the earth. The subscript notations $1$, $2$ refer to values in $D_1$ and $D_2$ respectively and $j = \{1; 2\}$ for values in both domains. The functions $h(\theta,z)$ and $k(\theta,z)$ describe the unknown deviations of the interface and the free surface from their unperturbed locations at $R_2$ and $R_1$ respectively, and the function $d(\theta,z)\approx R$ denotes an impermeable solid bottom of the ocean.
To make the physical problem more convenient, we make further the following assumptions.
\begin{assumption}
(i) For $(\theta,z)\in I_\theta\times\mr$, $h\in(h_-,h_+), k\in (k_-,k_+), d\in (d_-,d_+)$.

(ii) $d<R_2+h_-$ and $R_2+h_+<R_1+k_-$.

(iii) The density $\rho$ is a depth depending discontinuous function with a jump of height $\rho_2-\rho_1$ at the interface $R_2+h$, and is defined as
\begin{equation}\label{*}
\rho(r,\theta,z)=\rho_j(r)=\rho_j+\varepsilon_j(r)=\left\{\begin{array}{l}
\rho_1+\varepsilon_1(r)\quad \text { for } r\in[R_2+h_+,R_1+k_+)\\
\rho_1+0 \quad \text { for } r\in(R_2+h,R_2+h_+) \\
\rho_2+0 \quad \text { for } r\in(R_2+h_-,R_2+h) \\
\rho_2+\varepsilon_2(r)\quad \text { for } r\in(d_-,R_2+h_-]\tag{*}
\end{array}\right.
\end{equation}
in $D_j$ for positive constants $\rho_1<\rho_2$, and $\varepsilon_j:(d_-,R_1+k_+)\rightarrow \mr$ is a smooth depth depending function satisfying $|\varepsilon_j(r)|\ll \rho_2-\rho_1$.
\end{assumption}

The governing equations in the $(r,\theta,z)$ coordinate system can then be given as the Euler's equations \cite{CoJ16,CoJ162}
\begin{equation}\label{2.1a}
\begin{cases}
u_{j,t}+u_ju_{j,r}+\frac {v_j}{r} u_{j,\theta}+w_ju_{j,z}-\frac {{v_j}^2} r-2w_j\Om \cos\theta-r\Om^2\cos^2\theta\nonumber\\
=-\frac 1 {\rho_j} p_{j,r}-g,\nonumber\\
v_{j,t}+u_jv_{j,r}+\frac {v_j} rv_{j,\theta}+w_jv_{j,z}+\frac {u_jv_j} r+2w_j\Om \sin\theta+r\Om^2\sin\theta\cos\theta\nonumber\\
=-\frac 1 {\rho_j} \frac 1 r p_{j,\theta}+G(r,\theta),\nonumber\\
w_{j,t}+u_jw_{j,r}+\frac {v_j} rw_{j,\theta}+w_jw_{j,z}+2\Om(u_j\cos\theta-v_j\sin\theta)\nonumber\\
=-\frac 1 {\rho_j} p_{j,z},\nonumber\tag{2.1a}
\end{cases}
\end{equation}
together with the equation of mass conservation
\begin{equation}\label{2.1b}
\frac 1 r \frac \pa {\pa r}(r\rho_j u_j)+\frac 1 r \frac \pa {\pa\theta}(\rho_j v_j)
+\frac \pa {\pa z}(\rho_j w_j)=0,\tag{2.1b}
\end{equation}
and the boundary conditions at the surface
\begin{align}
u_1&=w_1k_z+\frac 1 r v_1 k_\theta \qquad \text{on}\; r=R_1+k(\theta,z),\tag{2.1c}\label{2.1c}\\
p_1&=P_1(\theta,z) \qquad \qquad\quad \!\!\text{on}\; r=R_1+k(\theta,z),\tag{2.1d}\label{2.1d}
\end{align}
with boundary conditions at the interface
\begin{align}
u_j&=w_jh_z+\frac 1 r v_j h_\theta \qquad \text{on}\; r=R_2+h(\theta,z),\tag{2.1e}\label{2.1e}\\
p_1&=p_2\qquad\qquad\qquad\quad \text{on}\; r=R_2+h(\theta,z)\tag{2.1f}\label{2.1f}
\end{align}
and the boundary condition at the bed
\begin{equation}\label{2.1g}
u_2=w_2d_z+\frac 1 r v_2 d_\theta \qquad \text{on}\; r=d(\theta,z).\tag{2.1g}
\end{equation}
Here $t$ is the time, $(u_j,v_j,w_j)$ represents the velocity components, $\Omega\approx7.29\times10^{-5}$ rad/s is the rotational speed of the Earth, $p_j(r,\theta,z)$ denotes the pressure, $g\approx9.81$ m/s is the gravitational acceleration at the Earth's surface, $G(r,\theta)$ is a general body force acts in the $\theta$-direction.

\section{Exact explicit and implicit solutions}
Our aim is to seek purely azimuthal flow solutions in the region of the ACC, which means a steady flow propagating in the azimuthal direction and being independent of $z$. Thus, the velocity field is characterised by $u_j=v_j=0$ and $w_j=w_j(r,\theta)$ and moreover $p_j=p_j(r,\theta)$, $h=h(\theta), k=k(\theta), d=d(\theta)$. With this form of velocity field, the kinematic boundary conditions \eqref{2.1c}, \eqref{2.1e}, \eqref{2.1g} and the equation of mass conservation \eqref{2.1b} are automatically satisfied, whereas the Euler equations \eqref{2.1a} read
\begin{equation}\label{3.1}
\begin{cases}
-2w_j\Om \cos\theta-r\Om^2\cos^2\theta&=-\frac 1 {\rho_j} p_{j,r}-g,\\
2w_j\Om \sin\theta+r\Om^2\sin\theta\cos\theta&=-\frac 1 {\rho_j} \frac 1 r p_{j,\theta}+G(r,\theta),\\
0&=-\frac 1 {\rho_j} p_{j,z}.
\end{cases}
\end{equation}
\subsection{Explicit solutions for azimuthal velocity and pressure}
The first equation and the second equation in \eqref{3.1} can be reformulated as
\begin{align}\label{3.2}
\begin{cases}
\rho_j\Om\cos\theta(2w_j+\Om r\cos\theta)&=p_{j,r}+\rho_j g,\\
\rho_j r \Om\sin\theta(2w_j+\Om r\cos\theta)&=-p_{j,\theta}+\rho_j r G(r,\theta).
\end{cases}
\end{align}
The elimination of the pressure $p_j$ from the system \eqref{3.2} leads to
\begin{align}\label{3.3}
&r\sin\theta\left(\rho_j r \Om \cos\theta(2w_j+\Om r\cos\theta)\right)_r+\cos\theta\left(\rho_j r \Om\cos\theta(2w_j+\Om r\cos\theta)\right)_\theta\nonumber\\
&=r\cos\theta\frac \pa {\pa r}\left(\rho_j r G(r,\theta)\right)
\end{align}
in $D_j$. Employing the method of characteristics \cite{FSC,HM3}, this equation can be explicitly solved as
\begin{equation}\label{3.4}
w_j(r,\theta)=-\frac {\Om r \cos\theta} 2 +\frac 1 {2\rho_j\Om}\left[\frac {F_j(r\cos\theta)}  {r\cos\theta}+\int^{f(\theta)} _0{H_{j,r}\left(\bar{r}(s),\bar{\theta}(s)\right)}\right],
\end{equation}
where $x\rightarrow F_j(x)$ denotes some arbitrary continuously differentiable functions and
\begin{align}\label{3.5}
&f(\theta)£º=\frac 1 2 \ln \frac {1+\sin\theta}{1-\sin\theta} \qquad
H_j(r,\theta):=r \rho_j(r) G(r,\theta)\Big|_{r\in D_j}\nonumber\\
&\bar{r}(s)=r\cos\theta \cosh(s)\qquad \bar{\theta}(s)=\arcsin\left(\tanh (s)\right).
\end{align}
To determine the pressure, we infer from \eqref{3.2} and \eqref{3.4} that
\begin{align}\label{3.6}
p_{j,r}(r,\theta)&=-g\rho_j+\frac{F_j(r\cos\theta)} {r}+\cos\theta\int^{f(\theta)}_0
H_{j,r}\left(\bar{r},\bar{\theta}\right)ds
\end{align}
and
\begin{align}\label{3.7}
p_{j,\theta}(r,\theta)&=H_j(r,\theta)-\tan\theta F_j(r\cos\theta)-r\sin\theta\int^{f(\theta)}_0
H_{j,r}\left(\bar{r},\bar{\theta}\right)ds
\end{align}
Integrating with respect to $r$ in \eqref{3.6} for low layer $(j=2)$ leads to
\begin{equation}\label{3.8}
p_2(r,\theta)=\int^{r\cos\theta}_{d(\theta)\cos\theta}
\left[\frac {F_2(y)} {y}+L_2(y,\theta)\right]dy-g\int^r_{d(\theta)}\rho_2(\tilde{r},\theta)d\tilde{r}
+f_2(\theta)
\end{equation}
where $\theta\rightarrow f_2(\theta)$ is a function (to be determined) and
\begin{equation}\label{3.9}
L_j(y,\theta):=\int^{f(\theta)}_0
H_{j,r}\left(y\cosh(s),\bar{\theta}(s)\right )ds
\end{equation}
satisfying
\begin{equation}\label{3.10}
L_{j,\theta}(y,\theta)=\sec\theta H_{j,r}\left(y\sec\theta,\theta\right ).
\end{equation}
Hence
\begin{equation}\label{3.11}
\int ^{r\cos\theta}_{d(\theta)\cos\theta}L_{j,\theta}(y,\theta)dy=\int ^r_{d(\theta)}H_{j,r}(\tilde{r},\theta)d\tilde{r}=H_j(r,\theta)-H_j(d(\theta),\theta),
\end{equation}
and $f_2(\theta)$ is then given as
\begin{align}\label{3.12}
f_2(\theta)&=\int^{d(\theta)\cos\theta}_{\frac {d(\frac \pi 4)} {\sqrt2}}\left[\frac {F_2(y)} {y}+L_2(y,\tilde{\theta})\right]dy\nonumber\\
&+\int^{\theta}_{\frac \pi 4}H_2(d(\tilde{\theta}),\tilde{\theta})d{\tilde{\theta}}-g\int^{d(\theta)}_{d(\frac \pi 4)}\rho_2\left(\tilde{r}\right)d\tilde{r}.
\end{align}
For the upper layer, we have
\begin{equation}\label{3.13}
p_1(r,\theta)=\int^{r\cos\theta}_{(R_2+h(\theta))\cos\theta}
\left[\frac {F_1(y)} {y}+L_1(y,\theta)\right]dy-g\int^r_{R_2+h(\theta)}\rho_2(\tilde{r})d\tilde{r},
+f_1(h,\theta)
\end{equation}
for $r\in [R_2+h(\theta),R_1+k(\theta)] $, where
\begin{align}\label{3.14}
f_1(h,\theta)&=\int^{\theta}_{\frac \pi 4}L_1((R_2+h(\tilde{\theta}))\cos\tilde{\theta},\tilde{\theta})
\left(h^\prime(\tilde{\theta})\cos\tilde{\theta}-(R_2+h(\tilde{\theta}))\sin\tilde{\theta}\right)d\tilde{\theta}\nonumber\\
&+\int^{\theta}_{\frac \pi 4}H_1(R_2+h(\tilde{\theta}),\tilde{\theta})d\tilde{\theta}
-g\int^{\theta}_{\frac \pi 4}\rho_1(R_2+h(\tilde{\theta}))h^\prime(\tilde{\theta})d\tilde{\theta}\nonumber\\
&+\int^{\theta}_{\frac \pi 4}F_1((R_2+h(\tilde{\theta}))\cos\tilde{\theta})
\left[-\tan\tilde{\theta}+\frac {h^\prime(\tilde{\theta})} {R_2+h(\tilde{\theta})}\right]d\tilde{\theta}.
\end{align}
Without loss of generality, we have assumed that
\begin{equation}\label{3.15}
h({\frac \pi 4})=0.
\end{equation}
From the dynamic condition on the surface \eqref{2.1d}, we reach that
\begin{align}\label{3.16}
P_1(\theta)=&\int^{(R_1+k(\theta))\cos\theta}_{(R_2+h(\theta))\cos\theta}
\left[\frac {F_1(y)} {y}+L_1(y,\theta)\right]dy\nonumber\\
&-g\int^{R_1+k(\theta)}_{R_2+h(\theta)}\rho_2(\tilde{r})d\tilde{r}
+f_1(h,\theta).
\end{align}
Let $h=k\equiv0$ in \eqref{3.16}, we get that
\begin{align}\label{3.17}
P^0_1(\theta)&=\int^{\theta}_{\frac \pi 4}\left[F_1(R_2\cos\tilde{\theta})(-\tan\tilde{\theta})+H_1(R_2,\tilde{\theta})
+L_1(R_2\cos\tilde{\theta},\tilde{\theta})(-R_2\sin\tilde{\theta})\right]d\tilde{\theta}
\nonumber\\
&+\int^{R_1\cos\theta}_{R_2\cos\theta}
\left[\frac {F_1(y)} {y}+L_1(y,\theta)\right]dy-g\int^{R_1}_{R_2}\rho_2(\tilde{r})d\tilde{r},
\end{align}
which denotes the surface pressure maintaining an undisturbed free surface and interface.
\begin{assumption}\label{asm3.1}
From the argument in \cite{WN}, we make the following assumptions:

(i) The azimuthal velocity satisfies
\begin{equation}\label{3.18}
0\leq w \leq 1 ms^{-1}\quad in\quad\bar {D}_1\cup\bar{D}_2.
\end{equation}

(ii) In the region of the interface at $1000 m\leq R_1-R_2\leq 1500 m$ depth, the prescribed background density satisfies
\begin{equation}\label{3.19}
\rho_1\approx1026\, kg\, m^{-3},\quad \rho_2=\rho_1(1+\sigma),\quad 0.2\,kg\, m^{-3}\leq\sigma\rho_1\leq 0.5\, kg\, m^{-3}.
\end{equation}
\end{assumption}
\begin{remark}\label{rem3.1}
Absorbing the the centripetal and gravitational acceleration into a modified pressure $\tilde p=p+\rho g r-\frac 1 2 \rho r^2\Om^2{\cos^2\theta}$, we can deduce from \eqref{3.1} that the linear flow $w_0$ is governed by
\begin{align}\label{3.20}
\begin{cases}
-2w_0\Om\cos\theta&=-\frac 1 \rho\tilde{p_r},\\
2w_0\Om\sin\theta&=-\frac 1 {r\rho}\tilde{p_\theta}+G(r,\theta).
\end{cases}
\end{align}
As $\frac 1 {r\rho}\tilde{p_\theta}$ is small enough, we can derive from \eqref{3.20} in combination with Assumption \ref{asm3.1} that
\begin{equation}\label{3.21}
G(r,\theta)\approx 2w_0\Om\sin\theta\geq0
\end{equation}
\end{remark}
\subsection{Implicit exact solution for the interface and surface}
In this subsection, we appeal to the implicit function theorem to build the existence of unique functions $h=h(\theta)$ and $k=k(\theta)$ representing disturbances of the flat interface and surface respectively. To ensure comparability of the involved variables, we first do dimensionless processing by setting
\begin{equation}\label{3.22}
\mathcal{h}(\theta)=\frac {h(\theta)} {R_2}, \quad \mathcal{k}(\theta)=\frac {k(\theta)} {R_1},\quad\mathcal{P}_1(\theta)=\frac {P_1(\theta)} {P_{atm}}.
\end{equation}
The calculation of the implicit formula for the interface rely on the dynamic boundary condition at the interface. In fact,
\begin{equation}\label{3.23}
p_1(R_2+h(\theta),\theta)=p_2(R_2+h(\theta),\theta)
\Longleftrightarrow
\mathcal{G}(\mathcal{h})(\theta)=0,
\end{equation}
where
\begin{equation}\label{3.24}
\mathcal{G}(\mathcal{h})(\theta):=\frac {p_2((1+\mathcal{h}(\theta))R_2,\theta)-p_1((1+\mathcal{h}(\theta))R_2,\theta)} {P_{atm}}
\end{equation}
and $p_j$ is defined by \eqref{3.8} and \eqref{3.13}. Similarly, the implicit determination of the free surface depends on the dynamic condition \eqref{2.1d} and we have
\begin{equation}\label{3.25}
p_1(h(\theta),R_1+k(\theta),\theta)=P_1(\theta)
\Longleftrightarrow \mathcal{F}(\mathcal{h},\mathcal{k},\mathcal{P}_1)(\theta)=0
\end{equation}
with
\begin{equation}\label{3.26}
\mathcal{F}(\mathcal{h},\mathcal{k},\mathcal{P}_1)(\theta):=\frac {p_1((1+\mathcal{h}(\theta))R_2,(1+\mathcal{k}(\theta))R_1,\theta)} {P_{atm}}-\mathcal{P}_1(\theta)
\end{equation}
By setting $\mathcal{P}^0_1=\frac {P^0_1} {P_{atm}}$, we can combine \eqref{3.17} and \eqref{3.26} to launch that
\begin{equation}\label{3.27}
\mathcal{F}(0,0,\mathcal{P}^0_1)=0.
\end{equation}
Now, we rephrase the problem of finding $h$ and $k$ as the problem of finding solutions to the equation
\begin{equation}\label{3.28}
(\mathcal{G}(\mathcal{h}),\mathcal{F}(\mathcal{h},\mathcal{k},\mathcal{P}_1))=0.
\end{equation}
By virtue of the implicit function theorem, we obtain the following theorem.
\begin{theorem}\label{the3.1}
If Assumption \ref{asm3.1} is satisfied and $F$, $G$ are continuously differentiable in each fluid layer, then for any sufficiently small perturbation $\mathcal{P}_1$ of $\mathcal{P}^0_1$, there exists a unique tuple $(\mathcal{h},\mathcal{k})\in C^1(I_\theta)\times C(I_\theta)$ such that \eqref{3.28} holds true. Furthermore, there exists a unique continuously differentiable implicit map $\mathfrak{F}: \mathcal{P}_1\rightarrow (\mathcal{h},\mathcal{k})$ defined on
a local neighborhood of $\mathcal{P}^0_1$ such that
\[
(\mathcal{h},\mathcal{k})=\mathfrak{F}(\mathcal{P}_1) \Longleftrightarrow \eqref{3.28}\; holds.
\]
If $F$, $G$ are $n\geq2$ times continuously differentiable or infinitely differentiable, then the local map
$\mathfrak{F}$ is $c^{n}$ or $c^{\infty}$, respectively.
\end{theorem}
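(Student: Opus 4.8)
The plan is to verify the hypotheses of the implicit function theorem on a Banach space for the map $(\mathcal{G},\mathcal{F})$ defined in \eqref{3.24} and \eqref{3.26}, working in the vicinity of the known solution $(\mathcal{h},\mathcal{k},\mathcal{P}_1)=(0,0,\mathcal{P}^0_1)$. First I would fix the functional setting: regard $\mathcal{G}$ as a map $C^1(I_\theta)\times C^1(I_\theta)\times C(I_\theta)\to C^1(I_\theta)$ (or rather, since $\mathcal{G}$ does not involve $\mathcal{k}$, as $\mathcal{G}:C^1(I_\theta)\to C^1(I_\theta)$) and $\mathcal{F}$ as a map $C^1(I_\theta)\times C(I_\theta)\times C(I_\theta)\to C(I_\theta)$. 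The regularity bookkeeping matters here: because $f_1(h,\theta)$ in \eqref{3.14} contains the term $h'(\tilde\theta)$ under the integral sign, $\mathcal{G}$ genuinely needs $\mathcal{h}\in C^1$, whereas once $\mathcal{h}$ is known to be $C^1$ the surface relation \eqref{3.16} determines $\mathcal{k}$ only through pointwise algebraic/integral dependence, so $C(I_\theta)$ is the natural target for the $\mathcal{k}$-component — this is exactly the asymmetry recorded in the statement $(\mathcal{h},\mathcal{k})\in C^1(I_\theta)\times C(I_\theta)$.

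Next I would check that $(\mathcal{G},\mathcal{F})$ is continuously (and, under the stronger hypotheses, $n$-times or infinitely) differentiable as a map between these Banach spaces. Since $p_1,p_2$ are built in \eqref{3.8}, \eqref{3.13}--\eqref{3.14} out of $F_j$, $H_j$ and the primitives $L_j$ by composition with the evaluation $r=(1+\mathcal{h})R_2$, integration in $r$ and in $\theta$, and multiplication, the Fréchet differentiability follows from the chain rule together with the smoothness of $F_j,G$ assumed in the theorem; the order of differentiability of the map inherits directly the order of differentiability of $F$ and $G$, which gives the final $c^n$/$c^\infty$ clause essentially for free once the first-order case is done.

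The heart of the argument is the invertibility of the partial derivative $\partial_{(\mathcal{h},\mathcal{k})}(\mathcal{G},\mathcal{F})$ at $(0,0,\mathcal{P}^0_1)$. Because $\mathcal{G}$ is independent of $\mathcal{k}$, this linearized operator is block triangular: it suffices to show that $\partial_{\mathcal{h}}\mathcal{G}(0)$ is an isomorphism of $C^1(I_\theta)$ and that $\partial_{\mathcal{k}}\mathcal{F}(0,0,\mathcal{P}^0_1)$ is an isomorphism of $C(I_\theta)$. For the latter, differentiating \eqref{3.16} in $\mathcal{k}$ produces, by the fundamental theorem of calculus, a multiplication operator whose symbol is
\[
R_1\cos\theta\left[\frac{F_1\big((R_1)\cos\theta\big)}{R_1\cos\theta}+L_1\big(R_1\cos\theta,\theta\big)\right]-gR_1\,\rho_1(R_1),
\]
and one reads off from \eqref{3.6} that this equals $R_1\,p_{1,r}(R_1,\theta)$ evaluated at the undisturbed configuration; Assumption \ref{asm3.1} together with \eqref{3.21} forces the modified pressure gradient to be dominated by the hydrostatic term $-g\rho_1<0$, so this symbol is bounded away from zero and the multiplication operator is boundedly invertible on $C(I_\theta)$. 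For $\partial_{\mathcal{h}}\mathcal{G}(0)$ the computation is more delicate: differentiating \eqref{3.24} in $\mathcal{h}$ yields the sum of a multiplication term — coming from evaluating $p_{2,r}-p_{1,r}$ at $r=R_2$, which by \eqref{3.6} carries the jump $-g(\rho_2-\rho_1)R_2$ plus the (small, by Assumption \ref{asm3.1}(ii) and the smallness of $\varepsilon_j$) contributions of $F_j$ and $L_j$ — and a Volterra-type integral term coming from the $\int_{\pi/4}^\theta(\cdots)\,d\tilde\theta$ structure of $f_1(h,\theta)$, including a $\mathcal{h}'$-contribution that makes the operator genuinely act on $C^1$. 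I expect this — showing that the dominant multiplication symbol $-g(\rho_2-\rho_1)R_2$ (nonzero by $\rho_1<\rho_2$, and quantitatively controlled by \eqref{3.19}) together with a compact Volterra perturbation gives an invertible operator on $C^1(I_\theta)$ — to be the main obstacle. The standard route is: write $\partial_{\mathcal{h}}\mathcal{G}(0)=M(\mathrm{Id}+M^{-1}K)$ with $M$ the invertible multiplication part and $K$ the Volterra operator, note that $M^{-1}K$ is a Volterra operator on the compact interval $I_\theta$ hence has spectral radius zero (its Neumann series converges), so $\mathrm{Id}+M^{-1}K$ is invertible. One must be slightly careful that $K$ maps $C^1$ to $C^1$ boundedly — the $\mathcal{h}'$ term is handled because $K$ integrates it, so $K$ actually gains a derivative — and that the "small" $F_j,L_j,\varepsilon_j$ terms do not spoil the lower bound on $M$, which is where Assumptions \ref{asm3.1} and the hypotheses $|\varepsilon_j|\ll\rho_2-\rho_1$ are used. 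Granting the block-triangular isomorphism, the implicit function theorem on Banach spaces yields the unique $C^1$ map $\mathfrak{F}:\mathcal{P}_1\mapsto(\mathcal{h},\mathcal{k})$ on a neighborhood of $\mathcal{P}^0_1$, and the transfer of higher regularity of $F,G$ to higher regularity of $\mathfrak{F}$ is the standard corollary of the smooth implicit function theorem, completing the proof.
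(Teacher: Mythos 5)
Your overall strategy is the same as the paper's: cast \eqref{3.28} as a zero-finding problem in Banach spaces $C^1(I_\theta)\times C(I_\theta)$, linearize at $(0,0,\mathcal{P}^0_1)$, exploit the triangular structure that comes from $\mathcal{G}$ being independent of $\mathcal{k}$, and show the diagonal blocks are isomorphisms; the higher-regularity clause then comes from the smooth implicit function theorem, exactly as you say. Your treatment of $\partial_{\mathcal{k}}\mathcal{F}(0,0,\mathcal{P}^0_1)$ also matches the paper: by \eqref{3.45} it is multiplication by the symbol $F_1(R_1\cos\theta)+L_1(R_1\cos\theta,\theta)R_1\cos\theta-gR_1\rho_1(R_1)$, which \eqref{3.41}--\eqref{3.42} bound away from zero using Assumption~\ref{asm3.1}.

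Where you diverge is in the treatment of $\partial_{\mathcal{h}}\mathcal{G}(0)$. You anticipate a genuine Volterra piece coming from the $\int_{\pi/4}^\theta(\cdots)\,d\tilde\theta$ terms in $f_1(h,\theta)$ and propose the standard $M(\mathrm{Id}+M^{-1}K)$ decomposition with a Neumann-series argument. The paper instead computes the Gateaux derivative directly in \eqref{3.29}--\eqref{3.36} and finds that the integral contributions cancel completely: in \eqref{3.30} the $L_{1,y}$ Volterra terms cancel internally after integration by parts, the remaining $L_{1,\theta}$ integral cancels against the $H_1$ contribution in \eqref{3.32} thanks to the identity $L_{j,\theta}(y,\theta)=\sec\theta\,H_{j,r}(y\sec\theta,\theta)$ from \eqref{3.10}, and the $F_1$ term in \eqref{3.34} collapses to a boundary term $F_1(R_2\cos\theta)\mathcal{h}$ by the fundamental theorem of calculus. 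The upshot \eqref{3.37} is that $\partial_{\mathcal{h}}\mathcal{G}(0)$ is a \emph{pure} multiplication operator, and the paper only needs the pointwise sign estimate \eqref{3.38}--\eqref{3.40}. The paper also shows explicitly in \eqref{3.44} that $\partial_{\mathcal{h}}\mathcal{F}(0,0,\mathcal{P}^0_1)=0$, so the linearization \eqref{3.43} is in fact diagonal, not merely triangular. Your Volterra argument is not wrong as a fallback (a Volterra operator on the compact interval $I_\theta$ does have spectral radius zero, and $K$ integrating $\mathcal{h}'$ is consistent with mapping $C^1\to C^1$), but it is more machinery than is needed and, more importantly, it skips the concrete computation that is really the substance of the paper's proof. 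If you carry out the differentiation you will discover the cancellation, and the invertibility becomes the elementary statement that the scalar symbol $\bar F_2-\bar F_1+\bar L_2-\bar L_1-gR_2(\rho_2-\rho_1)$ is uniformly negative on $I_\theta$ by \eqref{3.38}--\eqref{3.39}.
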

\begin{proof} The proof of Theorem \ref{the3.1} will be completed by the following to steps.

\vspace{0.3cm}

\textit{Step 1. The computation of $(\mathcal{G}_{\mathcal{h}}(0)\mathcal{h})(\theta), (\mathcal{F}_{\mathcal{h}}(0,0,\mathcal{P}^0_1)\mathcal{h})(\theta)$ and $(\mathcal{F}_{\mathcal{k}}(0,0,\mathcal{P}^0_1)\mathcal{k})(\theta)$.}

\vspace{0.3cm}

First, we calculate
\begin{align}\label{3.29}
&P_{atm}\cdot \lim_{s\rightarrow 0}\frac{(\mathcal{G}(s\mathcal{h})-\mathcal{G}(0))(\theta)} {s}\nonumber\\
&=\lim_{s\rightarrow 0} \frac{1}{s} \left(\int^{(1+s\mathcal{h})R_2\cos\theta}_{R_2\cos\theta}\left[\frac {F_2(y)}{y}+L_2(y,\theta)\right]dy-g\int^{(1+s\mathcal{h})R_2}_{R_2}\rho_2(\tilde{r})d\tilde{r}\right)\nonumber\\
&-\Big(\int^{\theta}_{\frac{\pi}{4}}L_1((1+s\mathcal{h})R_2\cos\tilde{\theta},\tilde{\theta})
(R_2s\mathcal{h}^\prime\cos\tilde{\theta}
-(1+s\mathcal{h})R_2\sin\tilde{\theta})d\tilde{\theta}\nonumber\\
&-\int^{\theta}_{\frac{\pi}{4}}L_1(R_2\cos\tilde{\theta},\tilde{\theta})(-R_2\sin\tilde{\theta})d\tilde{\theta}
+\int^{\theta}_{\frac{\pi}{4}}H_1((1+s\mathcal{h})R_2,\tilde{\theta})d\tilde{\theta}\nonumber\\
&-\int^{\theta}_{\frac{\pi}{4}}H_1(R_2,\tilde{\theta})d\tilde{\theta}\Big)
-\Big(\int^{\theta}_{\frac{\pi}{4}}F_1((1+s\mathcal{h})R_2\cos\tilde{\theta})\left[-\tan\tilde{\theta}
+\frac{s\mathcal{h}^\prime}{1+s\mathcal{h}^\prime}\right]d\tilde\theta\nonumber\\
&-\int^{\theta}_{\frac{\pi}{4}}F_1(R_2\cos\tilde{\theta})(-\tan\tilde{\theta})d\tilde\theta\Big)
+gR_2\int^{\theta}_{\frac{\pi}{4}}\rho_1((1+s\mathcal{h})R_2)s\mathcal{h}^\prime d{\tilde{\theta}}.
\end{align}
As
\begin{align}\label{3.30}
&\lim_{s\rightarrow 0}\frac{1}{s}\Big( \int^{\theta}_{\frac{\pi}{4}}L_1((1+s\mathcal{h})R_2\cos\tilde{\theta},\tilde{\theta})
(R_2s\mathcal{h}^\prime\cos\tilde{\theta}
-(1+s\mathcal{h})R_2\sin\tilde{\theta})d\tilde{\theta}\nonumber\\
&-\int^{\theta}_{\frac{\pi}{4}}L_1(R_2\cos\tilde{\theta},\tilde{\theta})(-R_2\sin\tilde{\theta})d\tilde{\theta}\Big)\nonumber\\
&=\lim_{s\rightarrow 0}\frac{1}{s}\Big(\int^{\theta}_{\frac{\pi}{4}} L_1((1+s\mathcal{h})R_2\cos\tilde{\theta},\tilde{\theta}) s R_{2}\left(\mathcal{h}^\prime\cos\tilde{\theta}-\mathcal{h}\sin\tilde{\theta}\right))d\tilde{\theta}\nonumber\\
&-\int^{\theta}_{\frac{\pi}{4}}R_2\sin\tilde{\theta}\left(L_1((1+s\mathcal{h})R_2\cos\tilde{\theta},\tilde{\theta})
-L_1(R_2\cos\tilde{\theta},\tilde{\theta})\right)\Big)d\tilde{\theta}\nonumber\\
&=\int^{\theta}_{\frac{\pi}{4}} L_1(R_2\cos\tilde{\theta},\tilde{\theta})d(R_2\mathcal{h}\cos\tilde{\theta})-\lim_{s\rightarrow 0}\frac{1}{s}\int^{\theta}_{\frac{\pi}{4}}R_2\sin\tilde{\theta}
\left(L_{1,y}(\xi_1,\tilde{\theta})s\mathcal{h}R_2\cos\tilde{\theta}\right)d\tilde{\theta}\nonumber\\
&:=A-R_2^2\int^{\theta}_{\frac{\pi}{4}}L_{1,y}(R_2\cos\tilde{\theta},\tilde{\theta})\mathcal{h}
\sin\tilde{\theta}\cos\tilde{\theta}d\tilde{\theta},
\end{align}
where $R_2\cos\tilde{\theta}\leq\xi_1\leq(1+s\mathcal{h})R_2\cos\tilde{\theta}$ and with
\begin{align}\label{3.31}
A&=L_1(R_2\cos\theta,\theta)R_2\mathcal{h}\cos\theta
+R_2^2\int^{\theta}_{\frac{\pi}{4}}L_{1,y}(R_2\cos\tilde{\theta},\tilde{\theta})\mathcal{h}
\sin\tilde{\theta}\cos\tilde{\theta}d\tilde{\theta}\nonumber\\
&-R_2\int^{\theta}_{\frac{\pi}{4}}L_{1,\theta}(R_2\cos\tilde{\theta},\tilde{\theta})
\mathcal{h}\cos\tilde{\theta}d\tilde{\theta},
\end{align}
and
\begin{align}\label{3.32}
&\lim_{s\rightarrow 0} \frac{1}{s}\int^{\theta}_{\frac{\pi}{4}}\left(H_1((1+s\mathcal{h})R_2,\tilde{\theta})
-H_1(R_2,\tilde{\theta})\right)d\tilde{\theta}\nonumber\\
&=\lim_{s\rightarrow 0}\frac{1}{s} \int^{\theta}_{\frac{\pi}{4}}H_{1,r}(\xi_2,\tilde{\theta})s\mathcal{h}R_2 d\tilde{\theta}
\nonumber\\
&=\int^{\theta}_{\frac{\pi}{4}}H_{1,r}(R_2,\tilde{\theta})\mathcal{h}R_2 d\tilde{\theta}=R_2\int^{\theta}_{\frac{\pi}{4}}L_{1,\theta}(R_2\cos\tilde{\theta},\tilde{\theta})
\mathcal{h}\cos\tilde{\theta}d\tilde{\theta}.
\end{align}
for $R_2\leq \xi_2 \leq (1+s\mathcal{h})R_2$ and the last equality is valid by \eqref{3.10}. Furthermore,
\begin{align}\label{3.33}
&\lim_{s\rightarrow 0} \frac{1}{s} \left(\int^{(1+s\mathcal{h})R_2\cos\theta}_{R_2\cos\theta}\left[\frac {F_2(y)}{y}+L_2(y,\theta)\right]dy-g\int^{(1+s\mathcal{h})R_2}_{R_2}\rho_2(\tilde{r})d\tilde{r}\right)\nonumber\\
&=\lim_{s\rightarrow 0}\left[\frac {F_2((1+s\mathcal{h})R_2\cos\theta)}{(1+s\mathcal{h})R_2\cos\theta}
+L_2((1+s\mathcal{h})R_2\cos\theta,\theta)\right]\mathcal{h}R_2\cos\theta\nonumber\\
&-g\rho_2((1+s\mathcal{h})R_2)\mathcal{h}R_2\nonumber\\
&=\left(F_2(R_2\cos\theta)+L_2(R_2\cos\theta,\theta)R_2\cos\theta-gR_2\rho_2\right)\mathcal{h}(\theta),
\end{align}
and
\begin{align}\label{3.34}
&\lim_{s\rightarrow 0} \frac{1}{s}
\int^{\theta}_{\frac{\pi}{4}}\left(F_1((1+s\mathcal{h})R_2\cos\tilde{\theta})\left[-\tan\tilde{\theta}
+\frac{s\mathcal{h}^\prime}{1+s\mathcal{h}^\prime}\right]
-F_1(R_2\cos\tilde{\theta})(-\tan\tilde{\theta})\right)d\tilde\theta\nonumber\\
&=\lim_{s\rightarrow 0}\frac{1}{s}\int^{\theta}_{\frac{\pi}{4}}\left(-\tan\tilde{\theta}F_{1,x}(\xi_3)s\mathcal{h}R_2\cos\tilde{\theta}
+F_1((1+s\mathcal{h})R_2\cos\tilde{\theta})\frac{s\mathcal{h}^\prime}{1+s\mathcal{h}^\prime}\right)
d\tilde\theta\nonumber\\
&=\int^{\theta}_{\frac{\pi}{4}}\left(-\tan\tilde{\theta}F_{1,x}(R_2\cos\tilde{\theta})\mathcal{h}R_2\cos\tilde{\theta}
+F_1(R_2\cos\tilde{\theta})\mathcal{h}^\prime\right)d\tilde\theta\nonumber\\
&=F_1(R_2\cos\tilde{\theta})\mathcal{h}\Big|_{\frac{\pi}{4}}^{\theta}
=F_1(R_2\cos\theta)\mathcal{h},
\end{align}
with $R_2\cos\tilde{\theta}\leq \xi_3\leq (1+s\mathcal{h})R_2\cos\tilde{\theta}$, and
\begin{equation}\label{3.35}
\lim_{s\rightarrow 0} \frac{1}{s} gR_2\int^{\theta}_{\frac{\pi}{4}}\rho_1((1+s\mathcal{h})R_2)s\mathcal{h}^\prime d{\tilde{\theta}}=gR_2\rho_1\mathcal{h}.
\end{equation}
Substituting \eqref{3.35}-\eqref{3.35} into \eqref{3.29}, we obtain that
\begin{align}\label{3.36}
&P_{atm}\cdot \lim_{s\rightarrow 0}\frac{(\mathcal{G}(s\mathcal{h})-\mathcal{G}(0))(\theta)} {s}\nonumber\\
&=\Big(F_2(R_2\cos\theta)-F_1(R_2\cos\theta)-gR_2(\rho_2-\rho_1)
+\big(L_2(R_2\cos\theta,\theta)\nonumber\\
&-L_1(R_2\cos\theta,\theta)\big)R_2\cos\theta\Big)\mathcal{h}(\theta),
\end{align}
then
\begin{align}\label{3.37}
(\mathcal{G}_{\mathcal{h}}(0)\mathcal{h})(\theta)&=\frac 1
{P_{atm}}\Big(F_2(R_2\cos\theta)-F_1(R_2\cos\theta)-gR_2(\rho_2-\rho_1)\nonumber\\
&+\big(L_2(R_2\cos\theta,\theta)-L_1(R_2\cos\theta,\theta)\big)R_2\cos\theta\Big)\mathcal{h}(\theta).
\end{align}
On the other hand, we compute similarly as \eqref{3.29} to get that for all $\theta\in I_{\theta}$
\begin{align}\label{3.44}
&P_{atm}(\mathcal{F}_{\mathcal{h}}(0,0,\mathcal{P}^0_1)\mathcal{h})(\theta)\nonumber\\
&=\lim_{s\rightarrow0}\frac {p_1((1+s\mathcal{h})R_2,R_1,\theta)-p_1(R_2,R_1,\theta)} {s}\nonumber\\
&=(F_1(R_2\cos\theta)+L_1(R_2\cos\theta,\theta)R_2\cos\theta-gR_2\rho_1\nonumber\\
&-F_1(R_2\cos\theta)-L_1(R_2\cos\theta,\theta)R_2\cos\theta+gR_2\rho_1)\mathcal{h}(\theta)\nonumber\\
&=0,
\end{align}
and
\begin{align}\label{3.45}
&P_{atm}(\mathcal{F}_{\mathcal{k}}(0,0,\mathcal{P}^0_1)\mathcal{k})(\theta)\nonumber\\
&=\lim_{s\rightarrow0}\frac {p_1(R_2,(1+s\mathcal{k})R_1,\theta)-p_1(R_2,R_1,\theta)} {s}\nonumber\\
&=(F_1(R_1\cos\theta)+L_1(R_1\cos\theta,\theta)R_1\cos\theta-gR_1\rho_1(R_1))\mathcal{k}(\theta).
\end{align}

\vspace{0.3cm}

\textit{Step 2. The evaluation of $(\mathcal{G}_{\mathcal{h}}(0)\mathcal{h})(\theta), (\mathcal{F}_{\mathcal{h}}(0,0,\mathcal{P}^0_1)\mathcal{h})(\theta)$ and $(\mathcal{F}_{\mathcal{k}}(0,0,\mathcal{P}^0_1)\mathcal{k})(\theta)$.}

\vspace{0.3cm}

We set $\bar{F}_j(\theta)=F_j(R_2\cos\theta)$, $\bar{L}_j(\theta)=L_j(R_2\cos\theta,\theta)R_2\cos\theta$ and $\bar{w}_j(\theta)=w_j(R_2,\theta)$ for $j=1,2$. Due to \eqref{3.4} and Assumption \ref{asm3.1}, we reach
\begin{align}\label{3.38}
&(\bar{F}_2-\bar{F}_1+\bar{L}_2-\bar{L}_1)(\theta)\nonumber\\
&=(F_2(R_2\cos\theta)+L_2 R_2\cos\theta )-(F_1(R_2\cos\theta)+L_1 R_2\cos\theta )\nonumber\\
&=(\bar{w}_2+\frac {\Om R_2\cos\theta} {2})2\rho_2\Om R_2\cos\theta-(\bar{w}_1+\frac {\Om R_2\cos\theta} {2})2\rho_1\Om R_2\cos\theta\nonumber\\
&=(\bar{w}_2+\frac {\Om R_2\cos\theta} {2})2\rho_1\Om R_2\cos\theta+(\bar{w}_2+\frac {\Om R_2\cos\theta} {2})2\rho_1\sigma\Om R_2\cos\theta\nonumber\\
&-(\bar{w}_1+\frac {\Om R_2\cos\theta} {2})2\rho_1\Om R_2\cos\theta\nonumber\\
&=2\rho_1\Om R_2\cos\theta(\bar{w}_2-\bar{w}_1)+2\rho_1\sigma\Om R_2\cos\theta(\bar{w}_2+\frac {\Om R_2\cos\theta} {2})\nonumber\\
&\leq R_2(2\rho_1\Om\bar{w}_2+2\rho_1\sigma\Om\bar{w}_2+R_2\rho_1\sigma\Om^2),\nonumber\\
&<0.17R_2 kgm^{-1}s^{-2}
\end{align}
for all $\theta\in I_{\theta}$. Meanwhile,
\begin{equation}\label{3.39}
gR_2(\rho_2-\rho_1)=gR_2\rho_1\sigma >1.9R_2 kgm^{-1}s^{-2}
\end{equation}
From the inequalities \eqref{3.38} and \eqref{3.39}, we can find a constant $\alpha<0$ such that
\begin{equation}\label{3.40}
(\bar{F}_2-\bar{F}_1+\bar{L}_2-\bar{L}_1)(\theta)-gR_2(\rho_2-\rho_1)\leq\alpha,
\end{equation}
holds right for all $\theta\in I_{\theta}$. On the other hand, we obtain that for all $\theta\in I_{\theta}$
\begin{align}\label{3.41}
&F_1(R_1\cos\theta)+L_1(R_1\cos\theta,\theta)R_1\cos\theta-gR_1\rho_1(R_1)\nonumber\\
&=(w_1(R_1,\theta)+\frac {\Om R_1\cos\theta} {2})2\rho_1(R_1)\Om R_1\cos\theta-gR_1\rho_1(R_1)\nonumber\\
&=\rho_1(R_1)[2\Om R_1\cos\theta w_1(R_1,\theta)+(\Om R_1\cos\theta)^2-gR_1]\nonumber\\
&\leq\rho_1(R_1)[2\Om R_1w_1(R_1,\theta)+(\Om R_1)^2-gR_1]\nonumber\\
&<-\rho_1(R_1)\cdot6\cdot10^{-7} kgm^{-1}s^{-2}
\end{align}
As noted above, we can infer that there is $\beta<0$ such that
\begin{equation}\label{3.42}
F_1(R_1\cos\theta)+L_1(R_1\cos\theta,\theta)R_1\cos\theta-gR_1\rho_1(R_1)\leq\beta
\end{equation}
for all $\theta\in I_{\theta}$. The inequalities \eqref{3.40} and \eqref{3.42} guarantee that the map
$$\mathcal{h}\rightarrow\mathcal{G}_{\mathcal{h}}(0)\mathcal{h}$$
is a linear topological automorphism from $C^1(I_{\theta})$ to $C^1(I_{\theta})$, and the map
$$\mathcal{k}\rightarrow\mathcal{F}_{\mathcal{k}}(0,0,\mathcal{P}^0_1)\mathcal{k}$$
is a linear topological automorphism from $C(I_{\theta})$ to $C(I_{\theta})$. To sum up, we can deduce that
\begin{equation}\label{3.43}
\begin{aligned}
(\mathcal{G},\mathcal{F})_{(\mathcal{h},\mathcal{k})}(0,0,\mathcal{P}^0_1)
&=\begin{pmatrix}
{\mathcal{G}_{\mathcal{h}}(0)}&{\mathcal{G}_{\mathcal{k}}(0)}\\
{\mathcal{F}_{\mathcal{h}}(0,0,\mathcal{P}^0_1)} & {\mathcal{F}_{\mathcal{k}}(0,0,\mathcal{P}^0_1)}
\end{pmatrix}\\
&=\begin{pmatrix}
{\mathcal{G}_{\mathcal{h}}(0)}&0\\
0 & {\mathcal{F}_{\mathcal{k}}(0,0,\mathcal{P}^0_1)}
\end{pmatrix}
\end{aligned}
\end{equation}
constitutes a linear topological automorphism of $C^1(I_{\theta})\times C(I_{\theta})$. Therefore, combining \eqref{3.27} with the above arguments in the two steps, we get the results for the existence of free surfaces and interfaces by the implicit function theorem.
\end{proof}
\section{Qualitative Analysis Of Solutions}
The aim of this section is to give qualitative results for the interface and the free surfaces, as well as a concrete practice that explicitly describes the interface.
\begin{theorem}\label{the4.1}(Smoothness of the interface).
If Assumption \ref{asm3.1} is satisfied and $F$, $G$ are infinitely differentiable in each fluid layer, and on the basis of Theorem \ref{the3.1}, set $\mathcal{P}_1$ is a given small enough perturbation of $\mathcal{P}^0_1$ associating with $\mathcal{h}\in C^1(\theta)$, then $\mathcal{h}\in C^\infty(\theta)$; if $F$, $G$ are n times continuously differentiable for $n\geq1$, then $\mathcal{h}\in C^{n+1}(\theta)$.
\end{theorem}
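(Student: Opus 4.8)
The plan is to convert the interface identity $\mathcal{G}(\mathcal{h})(\theta)=0$ supplied by Theorem~\ref{the3.1} into a first-order ordinary differential equation for $h=R_2\mathcal{h}$ on $I_\theta$ and then to bootstrap its regularity. First I would make the identity explicit: since in \eqref{3.13} the upper and lower limits coincide at $r=R_2+h(\theta)$, one has $p_1(R_2+h(\theta),\theta)=f_1(h,\theta)$ with $f_1$ as in \eqref{3.14}, whereas $p_2(R_2+h(\theta),\theta)$ is given by \eqref{3.8} at $r=R_2+h(\theta)$. Hence $\mathcal{G}(\mathcal{h})(\theta)=0$ is the relation
\begin{equation*}
\int^{(R_2+h(\theta))\cos\theta}_{d(\theta)\cos\theta}\Big[\tfrac{F_2(y)}{y}+L_2(y,\theta)\Big]dy-g\int^{R_2+h(\theta)}_{d(\theta)}\rho_2(\tilde r)\,d\tilde r+f_2(\theta)=f_1(h,\theta),\qquad\theta\in I_\theta,
\end{equation*}
together with the normalisation $h(\pi/4)=0$ of \eqref{3.15}.

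The next step would be to differentiate this identity in $\theta$, using the fundamental theorem of calculus on the $\int^{\theta}_{\pi/4}(\cdots)\,d\tilde\theta$ pieces (namely $f_1$ and the corresponding part of $f_2$) and the Leibniz rule on the remaining $\theta$-dependent integrals; the term $\int^{(R_2+h)\cos\theta}_{d\cos\theta}L_{2,\theta}(y,\theta)\,dy$ thereby produced is rewritten as $H_2(R_2+h,\theta)-H_2(d(\theta),\theta)$ via \eqref{3.10}--\eqref{3.11}. Collecting the terms proportional to $h'(\theta)$, their coefficient is
\begin{equation*}
a(\theta,h):=\frac{(F_2-F_1)\big((R_2+h)\cos\theta\big)}{R_2+h}+\big(L_2-L_1\big)\big((R_2+h)\cos\theta,\theta\big)\cos\theta-g\big(\rho_2(R_2+h)-\rho_1(R_2+h)\big),
\end{equation*}
which, by the relation $F_j(r\cos\theta)+L_j(r\cos\theta,\theta)\,r\cos\theta=2\rho_j\Om r\cos\theta\big(w_j+\tfrac12\Om r\cos\theta\big)$ implicit in \eqref{3.4}, coincides after the rescaling $r=R_2+h$ with the quantity already estimated in \eqref{3.38}--\eqref{3.40}. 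Thus for $\mathcal{P}_1$ sufficiently close to $\mathcal{P}^0_1$ the associated $h$ is small and the bounds \eqref{3.38}--\eqref{3.42} persist with $R_2$ replaced by $R_2+h$, providing a constant $\alpha'<0$ with $a(\theta,h(\theta))\le\alpha'$ on $I_\theta$; this is precisely the invertibility of $\mathcal{G}_{\mathcal{h}}(0)$ used in Theorem~\ref{the3.1}, carried over to nearby interfaces. Therefore the differentiated identity can be solved for $h'$, yielding
\begin{equation*}
h'(\theta)=\Psi\big(\theta,h(\theta)\big),\qquad h(\tfrac\pi4)=0,
\end{equation*}
where $\Psi=-b/a$ and $b$ is assembled from $F_1,F_2$ (without derivatives of them, as $f_1,f_2$ were already $\theta$-integrals), from $L_1,L_2$ and $H_j=r\rho_j(r)G(r,\theta)$, from $\rho_j$, from the prescribed smooth bottom $d$ and $d'$, and from smooth trigonometric factors.

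The last step is a bootstrap on this ODE. Theorem~\ref{the3.1} already gives $h\in C^1(I_\theta)$, so $\theta\mapsto\Psi(\theta,h(\theta))$ is continuous and hence $h\in C^2(I_\theta)$; inductively, once $h\in C^m(I_\theta)$ the composition $\theta\mapsto\Psi(\theta,h(\theta))$ is $C^m$ up to the order permitted by the smoothness of $F,G$ — which enter $\Psi$ only through $F_j$, $L_j$ and $H_j$, whose $\theta$- and $y$-derivatives are controlled by \eqref{3.10}--\eqref{3.11} and differentiation under the integral sign — so that $h\in C^{m+1}(I_\theta)$. If $F,G\in C^\infty$ in each layer the process never stops and $\mathcal{h}=h/R_2\in C^\infty(I_\theta)$; if $F,G\in C^n$ with $n\ge1$ it runs until $\mathcal{h}\in C^{n+1}(I_\theta)$.

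I expect the substantive work to be in the middle step: checking that the $h'$-coefficient $a(\theta,h)$ stays uniformly negative — which I would do by rerunning the estimates \eqref{3.38}--\eqref{3.42} with $R_2+h$ in place of $R_2$ and using the smallness of $h$ — and keeping precise track, through \eqref{3.10}--\eqref{3.11}, of exactly which derivatives of $L_j$ and $H_j$ appear in $\Psi$, so that the regularity of $F$ and $G$ is transmitted to $\Psi$ without spurious loss. The structural reason for the gain of one derivative over the data is that the interface condition is a Volterra-type integral relation, which one $\theta$-differentiation turns into a first-order ODE.
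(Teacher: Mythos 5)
Your proposal is correct and follows essentially the same route as the paper: differentiate the interface identity $\mathcal{G}(\mathcal{h})(\theta)=0$ once in $\theta$ to obtain a relation of the form $A(\theta)\,\mathcal{h}'(\theta)+B(\theta)=0$, reuse the estimates \eqref{3.38}--\eqref{3.40} at the perturbed radius $(1+\mathcal{h})R_2$ to show $A\le\gamma<0$ uniformly on $I_\theta$, and then bootstrap regularity from $\mathcal{h}'=-B/A$. The only surface difference is that you work with $h=R_2\mathcal{h}$ rather than the dimensionless $\mathcal{h}$, and you flag explicitly that the Volterra structure of the pressure integrals is what makes the differentiated identity a genuine first-order ODE — both points are implicit in the paper's computation and do not change the argument.
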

\begin{proof}
From Theorem \ref{the3.1}, $(\mathcal{G}(\mathcal{h}))(\theta)=0$ holds true for all $\theta\in I_\theta$, and thus a derivation with respect to $\theta$ leads to
\begin{equation}\label{4.1}
(A_1(\theta)+A_2)\mathcal{h}^\prime(\theta)+B(\theta)=0,
\end{equation}
where
\begin{align}\label{4.2}
A_1(\theta):=\frac {\bar{F}_2(\theta)-\bar{F}_1(\theta)+\bar{L}_2(\theta)-\bar{L}_1(\theta)} {1+\mathcal{h}},
\end{align}
\begin{align}\label{4.3}
A_2:=-gR_2(\rho_2-\rho_1),
\end{align}
\begin{align}\label{4.4}
B(\theta)
=[\bar{F}_2(\theta)-\bar{F}_1(\theta)+\bar{L}_2(\theta)-\bar{L}_1(\theta)](-\tan\theta)+\bar{H}_2(\theta)-\bar{H}_1(\theta),
\end{align}
\begin{align}\label{4.5}
\bar{F}_j(\theta)=F_j((1+\mathcal{h}(\theta))R_2\cos\theta),
\end{align}
\begin{align}\label{4.6}
\bar{H}_j(\theta)=H_j((1+\mathcal{h}(\theta))R_2,\theta),
\end{align}
\begin{align}\label{4.7}
\bar{L}_j(\theta)=L_j((1+\mathcal{h}(\theta))R_2\cos\theta,\theta)(1+\mathcal{h}(\theta))R_2\cos\theta.
\end{align}
Now, we certify that $A(\theta):=A_1+A_2<0$ for all $\theta\in I_\theta$. To this end, by setting $\bar{w}_j(\theta):=w_j((1+\mathcal{h})R_2,\theta)$ and computing similarly as \eqref{3.38}, we have
\begin{align}\label{4.8}
A_1(\theta)
&=2\rho_1\Om R_2\cos\theta(\bar{w}_2-\bar{w}_1)+2\rho_1\sigma\Om R_2\cos\theta\left(\bar{w}_2+\frac{\Om R_2(1+\mathcal{h})\cos\theta}{2}\right)\nonumber\\
&\leq2\rho_1\Om R_2\bar{w}_2+2\rho_1\sigma\Om R_2\bar{w}_2+\rho_1\sigma\Om^2R_2^2(1+\mathcal{h})\nonumber\\
&=R_2\left(2\rho_1\Om\bar{w}_2+2\rho_1\sigma\Om\bar{w}_2+\rho_1\sigma\Om^2R_2(1+\mathcal{h})\right)\nonumber\\
&\leq2R_2(2\rho_1\Om\bar{w}_2+2\rho_1\sigma\Om\bar{w}_2+\rho_1\sigma\Om^2R_2)\nonumber\\
&<0.34R_2kgm^{-1}s^{-2},
\end{align}
and by \eqref{3.39}, $A_2=-gR_2\rho_1\sigma<-1.9R_2 kgm^{-1}s^{-2}$. Hence, there exists some $\gamma <0$, such that
\begin{align}\label{4.9}
A(\theta)\leq\gamma \quad\text {for all}\; \theta\in I_\theta.
\end{align}

When $F$ and $G$ belong to $C^\infty$, respectively, $C^n$, it is correct that the regularity of $\mathcal{h}$ can be transferred to $A$ and $B$, in other words, $A, B\in C^1(I_\theta)$ is satisfied. On the other hand, \eqref{4.9} implies that $\mathcal{h}'=-\frac{B}{A}$, so we get $\mathcal{h}\in C^2(I_\theta)$. By induction, it follows that $\mathcal{h}\in C^\infty(I_\theta)$, respectively, $\mathcal{h}\in C^{n+1}(I_\theta)$. The proof is completed.
\end{proof}
\begin{example}
Given a priori explicit density, we now reveal an explicit solution in terms of the velocity field, the pressure, and the free surface and the interface. To be more precisely, we consider the following scenario: the density $\rho_2(r)$ throughout the whole lower layer is a constant, and for the upper layer, we identify from \eqref{*} that
\begin{align}\label{4.10}
\rho_1(r)=\rho_1+\varepsilon_1(r)=\rho_1+(\rho_2-\rho_1-a_1r)=\rho_2-a_1r,
\end{align}
where $a_1:=\frac {\rho_2-\rho_1}{R_2+h_+}$ is a constant satisfying the density shown in Assumption \ref{asm3.1}. Meanwhile, according to the proof in Theorem \ref{the4.1}, we deduce that
\begin{align}\label{4.11}
A_2=-gR_2(\rho_2(R_2+h(\theta))-\rho_1(R_2+h(\theta)))=-ga_1R_2^2(1+\mathcal{h}(\theta)).
\end{align}
We begin by specifying the velocity field. To this end,  we particularize
the functions $F_1$ and $F_2$ appearing in formula of the azimuthal velocity $w_j$ in \eqref{3.4}. That is, we set
\begin{align}\label{4.12}
F_1(x)=\alpha_1\rho_2\Om^2x^2,\quad F_2(x)=\alpha_2\rho_2\Om^2x^2
\end{align}
where $\alpha_1,\alpha_2$ are dimensionless constants satisfying $\alpha_2>\alpha_1$. Hence
\begin{align}\label{4.13}
\bar{F}_2(\theta)-\bar{F}_1(\theta)=(\alpha_2-\alpha_1)\rho_2\Om^2R_2^2(1+\mathcal{h})^2\cos^2\theta.
\end{align}
In addition, when $G$ is given as in formula \eqref{3.21}, we have
\begin{align}\label{4.14}
\begin{cases}
H_1(r,\theta)=\rho_1rG=2\Om w_0(\rho_2r-a_1r^2)\sin\theta,\\
H_2(r,\theta)=\rho_2rG=2\Om w_0\rho_2r\sin\theta,
\end{cases}
\end{align}
therefore
\begin{align}\label{4.15}
\bar{H}_2(\theta)-\bar{H}_1(\theta)=2a_1\Om w_0R_2^2(1+\mathcal{h})^2\sin\theta.
\end{align}
Moreover, we infer that
\[
H_{2,r}(r,\theta)-H_{1,r}(r,\theta)=\frac \pa {\pa r}(2\Om w_0a_1r^2\sin\theta)=4\Om w_0a_1r\sin\theta,
\]
and hence
\begin{align}
&H_{2,r}\left(r\cos\theta\frac {e^s+e^{-s}}{2},\bar{\theta}(s)\right)-H_{1,r}\left(r\cos\theta\frac {e^s+e^{-s}}{2},\bar{\theta}(s)\right)\nonumber\\
&=4\Om w_0a_1r\cos\theta\frac {e^s+e^{-s}}{2}\sin\bar{\theta}(s)\nonumber\\
&=2\Om w_0a_1r\cos\theta(e^s+e^{-s})\frac {e^{2s}-1}{e^{2s}+1}\nonumber\\
&=2\Om w_0a_1r\cos\theta (e^s-e^{-s}).\nonumber
\end{align}
From
\[
L_j(y,\theta)=\int^{f(\theta)}_{0}H_{j,r}\left(y\frac{e^s+e^{-s}}{2},\bar{\theta}(s)\right)ds,
\]
we find
\begin{align}\label{4.16}
L_2(r\cos\theta,\theta)-L_1(r\cos\theta,\theta)=4\Om w_0a_1r(1-\cos\theta),
\end{align}
thus
\begin{align}\label{4.17}
\bar{L}_2(\theta)-\bar{L}_1(\theta)=
4\Om w_0a_1{R_2}^2{(1+\mathcal{h})}^2\cos\theta(1-\cos\theta).
\end{align}
Based on this, the equation \eqref{4.1} can be written as
\begin{align}\label{4.18}
&\left((\alpha_2-\alpha_1)\rho_2\Om^2(-\sin\theta\cos\theta)+4\Om w_0a_1(-\sin\theta(1-\cos\theta))+2a_1\Om w_0\sin\theta\right)(1+\mathcal{h}(\theta))\nonumber\\
&+\left((\alpha_2-\alpha_1)\rho_2\Om^2\cos^2\theta+4\Om w_0a_1\cos\theta(1-\cos\theta)-ga_1\right)\mathcal{h}^\prime(\theta)=0,
\end{align}
which equals to
\begin{align}\label{4.19}
\frac{\mathcal{h}^\prime(\theta)}{1+\mathcal{h}(\theta)}=\frac{\tilde{E}(\theta)}{E(\theta)},
\end{align}
where
\begin{align}\label{4.20}
&\tilde{E}(\theta)=(\alpha_2-\alpha_1)\rho_2\Om^2(\sin\theta\cos\theta)+4\Om w_0a_1(\sin\theta(1-\cos\theta))-2a_1\Om w_0\sin\theta,\nonumber\\
&E(\theta)=(\alpha_2-\alpha_1)\rho_2\Om^2\cos^2\theta+4\Om w_0a_1\cos\theta(1-\cos\theta)-ga_1.
\end{align}
Noting that $E^\prime (\theta)=-2\tilde{E}(\theta)$, we get from \eqref{4.19}
\[
\frac {1}{1+\mathcal{h}(\theta)}=C\sqrt{E(\theta)}
\]
Recalling $\mathcal{h}(\frac{\pi}{4})=0$, we obtain $C=1/\sqrt{E(\frac{\pi}{4})}$ and hence
\begin{align}\label{4.21}
1+\mathcal{h}(\theta)=\sqrt{\frac{E(\frac{\pi}{4})}{E(\theta)}}.
\end{align}
Submitting \eqref{4.12} and \eqref{4.14} into \eqref{3.4}, we can obtain the velocity field $w_j$, and submitting \eqref{4.12}, \eqref{4.14} and \eqref{4.21} into \eqref{3.8} and \eqref{3.13}, we can get the pressure $p_j$. Finally, the free surface can be solved by \eqref{3.16} and \eqref{4.21}.
\end{example}
\begin{theorem}(Monotonicity preserving properties). If Assumption \ref{asm3.1} is satisfied and $F_j, G$ are infinitely differentiable in each fluid layer, and on the basis of Theorem \ref{the3.1}, set $\mathcal{P}_1\in C^1(I_\theta)$ of $\mathcal{P}^0_1$ being a given small enough perturbation associated with $\mathcal{k}\in C(I_\theta)$, then $\mathcal{k}\in C^1(I_\theta)$ for all $\theta\in I_\theta$. Moreover, for $\theta\in I_\theta$, it holds that
\begin{align}\label{4.22}
\mathcal{P}_1'(\theta)<0\;\; \text{if}\;\; \mathcal{k}'(\theta)\geq 0,
\end{align}
and
\begin{align}\label{4.23}
\mathcal{k}'(\theta)<0\;\; \text{if}\;\; \mathcal{P}_1'(\theta)\geq0.
\end{align}
\end{theorem}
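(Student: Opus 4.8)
The plan is to mimic the proof of Theorem~\ref{the4.1}: I would start from the identity $\mathcal{F}(\mathcal{h},\mathcal{k},\mathcal{P}_1)(\theta)=0$ furnished by Theorem~\ref{the3.1}, which in dimensional form reads $p_1(R_1+k(\theta),\theta)=P_1(\theta)$, and differentiate it in $\theta$. The first point to settle is that this differentiation is licit, since a priori $\mathcal{k}$ is only continuous (whereas $\mathcal{h}\in C^\infty$ by Theorem~\ref{the4.1}, $\mathcal{P}_1\in C^1$ by hypothesis, and $F_j,G\in C^\infty$, so the pressure field $p_1=p_1(r,\theta)$ is smooth in $(r,\theta)$). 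For this I would record, from the first equation of \eqref{3.1}, that $p_{1,r}(r,\theta)=\rho_1(r)\big(2\Om\cos\theta\,w_1(r,\theta)+\Om^2 r\cos^2\theta-g\big)$; by Assumption~\ref{asm3.1} one has $0\le w_1\le 1$ m/s, and with $\Om\approx 7.29\times 10^{-5}$, $R\approx 6378$ km and $g\approx 9.81$ the bracket is dominated by $-g$, so $p_{1,r}$ is continuous and strictly negative on the free surface, hence on a neighbourhood of it. The implicit function theorem, applied to $p_1(r,\theta)-P_1(\theta)=0$ in the variable $r$, would then upgrade the merely continuous $\mathcal{k}$ to an element of $C^1(I_\theta)$ as soon as $\mathcal{P}_1\in C^1(I_\theta)$, and the chain rule would produce the identity
\[
P_1'(\theta)=p_{1,r}\big(R_1+k(\theta),\theta\big)\,k'(\theta)+p_{1,\theta}\big(R_1+k(\theta),\theta\big).
\]

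Next I would pin down the sign of $p_{1,\theta}$ on the surface. From the second equation of \eqref{3.1}, $p_{1,\theta}(r,\theta)=\rho_1(r)\,r\big(G(r,\theta)-\Om\sin\theta\,(2w_1(r,\theta)+\Om r\cos\theta)\big)$, and Remark~\ref{rem3.1} gives $G(r,\theta)\approx 2w_0\Om\sin\theta\ge 0$, so $p_{1,\theta}(r,\theta)\approx \rho_1(r)\,r\,\Om\sin\theta\,\big(2(w_0-w_1)-\Om r\cos\theta\big)$. On $I_\theta$ one has $\sin\theta>0$, while $|2(w_0-w_1)|\le 2$ m/s is negligible next to $\Om r\cos\theta$ (of order a few hundred m/s), so I would conclude $p_{1,\theta}(R_1+k(\theta),\theta)<0$ for every $\theta\in I_\theta$.

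With $p_{1,r}<0$ and $p_{1,\theta}<0$ both in hand, the two monotonicity statements fall out of the displayed identity. If $k'(\theta)\ge 0$ then $p_{1,r}k'\le 0$, whence $P_1'(\theta)=p_{1,r}k'+p_{1,\theta}<0$, which is \eqref{4.22}. Conversely, if $P_1'(\theta)\ge 0$ then $p_{1,r}(R_1+k)\,k'=P_1'-p_{1,\theta}\ge -p_{1,\theta}>0$, and since $p_{1,r}<0$ this forces $k'(\theta)<0$, which is \eqref{4.23}. Finally, since $\mathcal{k}=k/R_1$ and $\mathcal{P}_1=P_1/P_{atm}$ with $R_1,P_{atm}>0$, the sign of $\mathcal{k}'$ agrees with that of $k'$ and the sign of $\mathcal{P}_1'$ with that of $P_1'$, so the inequalities transfer verbatim to the dimensionless quantities.

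The chain-rule computation and the order-of-magnitude estimates are routine. The delicate step will be the regularity claim: before the identity for $P_1'$ can be written down one must make sure the a priori merely continuous $\mathcal{k}$ is genuinely differentiable, which requires checking the joint $C^1$-smoothness of $p_1(r,\theta)-P_1(\theta)$ near the surface and the non-degeneracy $p_{1,r}\neq 0$ there — or, alternatively, running a bootstrap exactly as in the proof of Theorem~\ref{the4.1}. The only other place where the physical scalings genuinely matter is the strict negativity of $p_{1,\theta}$, resting on $G\ge 0$ from Remark~\ref{rem3.1} and on the azimuthal velocity being negligible compared with $\Om R\cos\theta$.
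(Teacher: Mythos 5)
Your proof is correct and follows the same overall skeleton as the paper's: differentiate the dynamic boundary condition $p_1(R_1+k(\theta),\theta)=P_1(\theta)$ in $\theta$ to get $P_1'=p_{1,r}\,k'+p_{1,\theta}$, observe $p_{1,r}<0$ on the surface (same estimate as \eqref{3.41}/\eqref{4.28}), and read off the signs. The genuine difference lies in how the $p_{1,\theta}$ term is handled. You establish the strict inequality $p_{1,\theta}<0$ directly from the second Euler equation and Remark~\ref{rem3.1}, via $p_{1,\theta}\approx\rho_1 r\,\Omega\sin\theta\,(2(w_0-w_1)-\Omega r\cos\theta)<0$, and then read both implications off the identity $P_1'=p_{1,r}k'+p_{1,\theta}$ with $p_{1,r},p_{1,\theta}<0$. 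The paper instead invokes the physical observation (cited from [CoJ162]) that the meridional pressure gradient is small compared to the radial one to assert the inequality \eqref{4.26}, $\tfrac{1}{R_1}p_{1,\theta}/\rho\le\tfrac{1}{R_1}p_{1,r}/\rho$, and then bounds $\mathcal{P}_1'$ by a multiple of $(\mathcal{k}'+\tfrac1{R_1})\cdot p_{1,r}/\rho_1$. Your route is arguably more self-contained, since the sign of $p_{1,\theta}$ is deduced from the governing equations and Assumption~\ref{asm3.1} rather than imported as a separate scaling assumption; the paper's route, however, yields the stronger quantitative conclusion $\mathcal{k}'(\theta)\le -\tfrac1{R_1}$ in case \eqref{4.23}, which your argument only gives as $\mathcal{k}'<0$ — though both suffice for the theorem as stated. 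On the regularity upgrade $\mathcal{k}\in C^1(I_\theta)$, you propose an implicit-function-theorem argument based on $p_{1,r}\ne 0$ near the surface, whereas the paper simply cites a bootstrap; both are acceptable and you correctly flagged this as the step needing care before the chain-rule identity can be written down.
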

\begin{proof}
We first remark that utilising an iterative bootstrapping procedure, smoothness properties of $\mathcal{P}_1$ can be transferred to $\mathcal{k}$. Hence, we differentiate with respect to $\theta$ in \eqref{3.16} to reach that
\begin{align}\label{4.24}
&P_{atm}\mathcal{P}'_1(\theta)\nonumber\\
&=\left[\frac{F_1((R_1+k(\theta))\cos\theta)}{(R_1+k(\theta))\cos\theta}+L_1((R_1+k(\theta))\cos\theta,\theta)\right][k'(\theta)\cos\theta-(R_1+k(\theta))\sin\theta]\nonumber\\
&-g\rho_1(R_1+k(\theta))k'(\theta)+H_1(R_1+k(\theta),\theta)\nonumber\\
&=\left(\bar{w}_1+\frac{\Om (R_1+k(\theta))\cos\theta}{2}\right)2\rho_1(R_1+k(\theta))\Om[k'(\theta)\cos\theta-(R_1+k(\theta))\sin\theta]\nonumber\\
&-g\rho_1(R_1+k(\theta))k'(\theta)+H_1(R_1+k(\theta),\theta)\nonumber\\
&=\left(2\bar{w}_1\Om \cos\theta+\Om^2\cos^2\theta R_1(1+\mathcal{k}(\theta))-g\right)R_1\rho_1(R_1(1+\mathcal{k}(\theta)))\mathcal{k}'(\theta)\nonumber\\
&+p_{1,\theta}(R_1(1+\mathcal{k}(\theta),\theta)
\end{align}
for all $\theta \in I _\theta$, where $\bar{w}_1(\theta):=w_1(R_1(1+\mathcal{k}),\theta)$. And thus
\begin{align}\label{4.25}
\frac{P_{atm}}{R_1\rho_1(R_1(1+\mathcal{k}(\theta)))}\mathcal{P}'_1(\theta)
&=\left(2\bar{w}_1\Om \cos\theta+\Om^2\cos^2\theta R_1(1+\mathcal{k}(\theta))-g\right)\mathcal{k}'(\theta)\nonumber\\
&+\frac 1 {R_1\rho_1(R_1(1+\mathcal{k}(\theta)))}p_{1,\theta}(R_1(1+\mathcal{k}(\theta),\theta)
\end{align}
According to the discussion in \cite{CoJ162}, the meridional pressure gradient is typically relatively small compared to the radial gradient. Then by the first equation in \eqref{3.1}, it holds that
\begin{equation}\label{4.26}
\frac 1 {R_1}\frac {p_{1,\theta}}{\rho}\leq \frac 1 {R_1}\frac {p_{1,r}}{\rho}=\frac 1 {R_1}\left(2\bar{w}_1\Om \cos\theta+\Om^2\cos^2\theta R_1(1+\mathcal{k}(\theta))-g\right),
\end{equation}
implying
\begin{align}\label{4.27}
&\frac{P_{atm}}{R_1\rho_1(R_1(1+\mathcal{k}(\theta)))}\mathcal{P}'_1(\theta)\nonumber\\
&\leq(\mathcal{k}'(\theta)+\frac 1 {R_1})\left(2\bar{w}_1\Om \cos\theta+\Om^2\cos^2\theta R_1(1+\mathcal{k}(\theta))-g\right).
\end{align}
A similar computation as \eqref{3.41} in the proof of Theorem \ref{the3.1} shows that there exists a constant $\delta<0$ such that for all $\theta\in I_\theta$
\begin{align}\label{4.28}
2\bar{w}_1\Om \cos\theta+R_1\Om^2\cos^2\theta(1+\mathcal{k})-g\leq\delta<0.
\end{align}
Thus, assuming $\mathcal{P}'_1(\theta)\geq0$, we see from \eqref{4.27} that $\mathcal{k}'(\theta)\leq-\frac 1 { R_1}<0$, which shows \eqref{4.23}. The proof of the monotonicity properties \eqref{4.22} follows by the same arguments.
\end{proof}

\vspace{0.5cm}
\noindent {\bf Acknowledgements.}
The work of Fan is partially supported by a NSFC Grant No. 11701155 and the NSF of Henan Normal University Grant No. 2021PL04.

\end{document}